\newtheorem{theorem}{\bfseries Theorem}
\newtheorem{remark}{Remark}
\newtheorem{lemma}{Lemma}
\long\def\symbolfootnote[#1]#2{\begingroup%
\def\thefootnote{\fnsymbol{footnote}}\footnote[#1]{#2}\endgroup} 
\def\compress{\vspace{-0.7ex}}
\renewcommand{\thefootnote}{\fnsymbol{footnote}}
\begin{document}

\title{The Capacity of a Class of Multi-Way Relay Channels}
\author{\authorblockN{Lawrence Ong, Sarah J. Johnson, and Christopher M. Kellett}
\authorblockA{School of Electrical Engineering and Computer Science,
The University of Newcastle\\
Email: lawrence.ong@cantab.net; \{sarah.johnson, chris.kellett\}@newcastle.edu.au}
}

\maketitle

\begin{abstract}
The capacity of a class of multi-way relay channels, where $L$ users communicate via a relay (at possibly different rates), is derived for the case where the channel outputs are modular sums of the channel inputs and the receiver noise. The cut-set upper bound to the capacity is shown to be achievable. More specifically, the capacity is achieved using (i) rate splitting, (ii) functional-decode-forward, and (iii) joint source-channel coding. We note that while separate source-channel coding can achieve the common-rate capacity, joint source-channel coding is used to achieve the capacity for the general case where the users are transmitting at different rates.
\end{abstract}

\IEEEpeerreviewmaketitle

\section{Introduction}\symbolfootnote[0]{This work is supported by the Australian Research Council under grants DP0877258 and DP1093114.}
We consider the multi-way relay channel (MWRC), where $L$ users ($L \geq 2$) exchange data via a relay, and where there is no direct link between the users. Common applications of this model are conference calls in the cellular network and satellite communications.

The MWRC is an extension of the two-way relay channel (TWRC) where two users ($L=2$) exchange data via a relay (e.g., see \cite{knopp06,rankovwittneben06,rankovwittneben07}). The Gaussian MWRC, where the channels between the nodes are additive white Gaussian noise channels, was first investigated by Gündüz \emph{et al.}~\cite{gunduzyener09}. An upper bound and a few achievable rate regions, based on the coding strategies for the relay channel, were derived using: (i) \emph{complete-decode-forward} (CDF) where the relay completely decodes the users' messages and broadcasts a function of the messages back to the users, (ii) \emph{compress-forward} where the relay quantizes its received signals, re-encodes and broadcasts them to the users, and (iii) \emph{amplify-forward} where the relay simply scales and forwards what it receives.
These coding strategies, however, fail to achieve the capacity of the MWRC.

Recently, \emph{functional-decode-forward} (FDF) has been proposed for the TWRC, where the relay decodes a function of the users' messages and broadcasts the function back to the users. 
FDF has been shown to achieve within $\frac{1}{2}$ bit of the capacity of the Gaussian TWRC~\cite{namchung08}. We later proposed FDF for the multi-way relay channel (MWRC), and showed that FDF achieves the \emph{common-rate} (where all users exchange information at the same rate) capacity of the binary MWRC~\cite{ongjohnsonkellett10cl}, where the channels are binary symmetric. Applying insights from the binary MWRC has allowed us to obtain the common-rate capacity of the the Gaussian MWRC with three or more users where all nodes transmit at the same power~\cite{ong10amwrc}. The ``general'' capacity (i.e., where users can transmit at possibly different rates) of the MWRC is not yet known.

In this paper, we work toward this goal by deriving the ``general'' capacity of the \emph{finite field adder} MWRC, where the channel outputs are the summation (in finite field arithmetic) of the channel inputs and the receiver noise. We show that the capacity can be achieved by combining the ideas of (i) rate splitting, (ii) our proposed FDF~\cite{ongjohnsonkellett10cl}, and (iii) the joint source-channel coding for broadcast channels by Tuncel \cite{tuncel06}. This, to the best of our knowledge, is the first example of the MWRC where the capacity is found for all noise distributions/levels. 

The rest of the paper is organized as follows. We define the channel model of the finite field adder MWRC in Sec.~\ref{model}, and find a capacity upper bound in Sec.~\ref{ub}. In Sec.~\ref{linear}, we construct a linear code that is optimal for the point-to-point finite field adder channel. Using this linear code, we propose a coding strategy using the ideas of rate splitting, FDF, and joint source-channel coding to obtain the capacity of the finite field adder MWRC in Sec.~\ref{fdf}. Lastly, in Sec.~\ref{compare}, we compare the capacity with two other coding strategies, namely FDF with rate splitting and separate source-channel coding and CDF, and discuss why these two strategies fall short of the capacity.

\section{Channel Model}\label{model}

\begin{figure}[t]
\centering
\resizebox{7.7cm}{!}{
\begin{picture}(0,0)%
\includegraphics{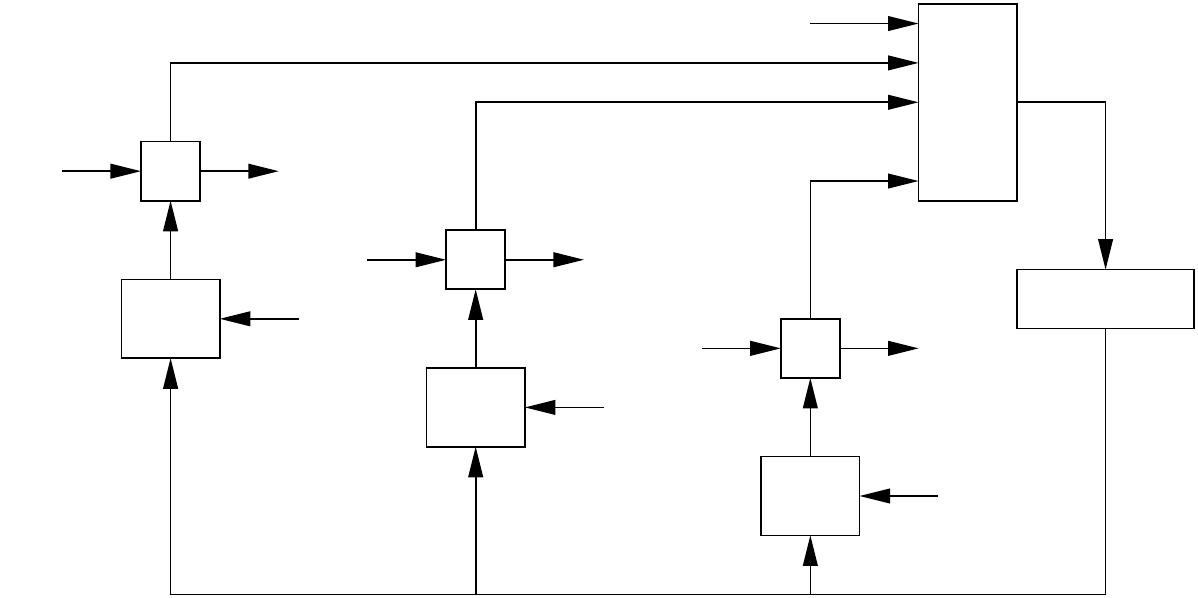}%
\end{picture}%
\setlength{\unitlength}{4144sp}%
\begingroup\makeatletter\ifx\SetFigFont\undefined%
\gdef\SetFigFont#1#2#3#4#5{%
  \fontsize{#1}{#2pt}%
  \fontfamily{#3}\fontseries{#4}\fontshape{#5}%
  \selectfont}%
\fi\endgroup%
\begin{picture}(5472,2724)(301,-2413)
\put(1036,-511){\makebox(0,0)[lb]{\smash{{\SetFigFont{12}{14.4}{\familydefault}{\mddefault}{\updefault}{\color[rgb]{0,0,0}$1$}%
}}}}
\put(316,-511){\makebox(0,0)[lb]{\smash{{\SetFigFont{12}{14.4}{\familydefault}{\mddefault}{\updefault}{\color[rgb]{0,0,0}$W_1$}%
}}}}
\put(1621,-511){\makebox(0,0)[lb]{\smash{{\SetFigFont{12}{14.4}{\familydefault}{\mddefault}{\updefault}{\color[rgb]{0,0,0}$\underline{\hat{W}_1}$}%
}}}}
\put(5041,-1096){\makebox(0,0)[lb]{\smash{{\SetFigFont{12}{14.4}{\familydefault}{\mddefault}{\updefault}{\color[rgb]{0,0,0}$0$  (relay)}%
}}}}
\put(3961,-1321){\makebox(0,0)[lb]{\smash{{\SetFigFont{12}{14.4}{\familydefault}{\mddefault}{\updefault}{\color[rgb]{0,0,0}$L$}%
}}}}
\put(3241,-1321){\makebox(0,0)[lb]{\smash{{\SetFigFont{12}{14.4}{\familydefault}{\mddefault}{\updefault}{\color[rgb]{0,0,0}$W_L$}%
}}}}
\put(4546,-1321){\makebox(0,0)[lb]{\smash{{\SetFigFont{12}{14.4}{\familydefault}{\mddefault}{\updefault}{\color[rgb]{0,0,0}$\underline{\hat{W}_L}$}%
}}}}
\put(2431,-916){\makebox(0,0)[lb]{\smash{{\SetFigFont{12}{14.4}{\familydefault}{\mddefault}{\updefault}{\color[rgb]{0,0,0}$2$}%
}}}}
\put(1711,-916){\makebox(0,0)[lb]{\smash{{\SetFigFont{12}{14.4}{\familydefault}{\mddefault}{\updefault}{\color[rgb]{0,0,0}$W_2$}%
}}}}
\put(3016,-916){\makebox(0,0)[lb]{\smash{{\SetFigFont{12}{14.4}{\familydefault}{\mddefault}{\updefault}{\color[rgb]{0,0,0}$\underline{\hat{W}_2}$}%
}}}}
\put(1126,-151){\makebox(0,0)[lb]{\smash{{\SetFigFont{12}{14.4}{\familydefault}{\mddefault}{\updefault}{\color[rgb]{0,0,0}$X_1$}%
}}}}
\put(4141,-466){\rotatebox{90.0}{\makebox(0,0)[lb]{\smash{{\SetFigFont{12}{14.4}{\familydefault}{\mddefault}{\updefault}{\color[rgb]{0,0,0}$\dotsm$}%
}}}}}
\put(3781,164){\makebox(0,0)[lb]{\smash{{\SetFigFont{12}{14.4}{\familydefault}{\mddefault}{\updefault}{\color[rgb]{0,0,0}$N_0$}%
}}}}
\put(811,-871){\makebox(0,0)[lb]{\smash{{\SetFigFont{12}{14.4}{\familydefault}{\mddefault}{\updefault}{\color[rgb]{0,0,0}$Y_1$}%
}}}}
\put(4051,-691){\makebox(0,0)[lb]{\smash{{\SetFigFont{12}{14.4}{\familydefault}{\mddefault}{\updefault}{\color[rgb]{0,0,0}$X_L$}%
}}}}
\put(3736,-1681){\makebox(0,0)[lb]{\smash{{\SetFigFont{12}{14.4}{\familydefault}{\mddefault}{\updefault}{\color[rgb]{0,0,0}$Y_L$}%
}}}}
\put(4591,-1996){\makebox(0,0)[lb]{\smash{{\SetFigFont{12}{14.4}{\familydefault}{\mddefault}{\updefault}{\color[rgb]{0,0,0}$N_L$}%
}}}}
\put(2206,-1276){\makebox(0,0)[lb]{\smash{{\SetFigFont{12}{14.4}{\familydefault}{\mddefault}{\updefault}{\color[rgb]{0,0,0}$Y_2$}%
}}}}
\put(2521,-331){\makebox(0,0)[lb]{\smash{{\SetFigFont{12}{14.4}{\familydefault}{\mddefault}{\updefault}{\color[rgb]{0,0,0}$X_2$}%
}}}}
\put(1666,-1186){\makebox(0,0)[lb]{\smash{{\SetFigFont{12}{14.4}{\familydefault}{\mddefault}{\updefault}{\color[rgb]{0,0,0}$N_1$}%
}}}}
\put(5041,-331){\makebox(0,0)[lb]{\smash{{\SetFigFont{12}{14.4}{\familydefault}{\mddefault}{\updefault}{\color[rgb]{0,0,0}$Y_0$}%
}}}}
\put(5041,-2311){\makebox(0,0)[lb]{\smash{{\SetFigFont{12}{14.4}{\familydefault}{\mddefault}{\updefault}{\color[rgb]{0,0,0}$X_0$}%
}}}}
\put(3061,-1996){\makebox(0,0)[lb]{\smash{{\SetFigFont{12}{14.4}{\familydefault}{\mddefault}{\updefault}{\color[rgb]{0,0,0}$\dotsm$}%
}}}}
\put(3061,-1591){\makebox(0,0)[lb]{\smash{{\SetFigFont{12}{14.4}{\familydefault}{\mddefault}{\updefault}{\color[rgb]{0,0,0}$N_2$}%
}}}}
\put(991,-1186){\makebox(0,0)[lb]{\smash{{\SetFigFont{12}{14.4}{\familydefault}{\mddefault}{\updefault}{\color[rgb]{0,0,0}$\bigoplus$}%
}}}}
\put(2386,-1591){\makebox(0,0)[lb]{\smash{{\SetFigFont{12}{14.4}{\familydefault}{\mddefault}{\updefault}{\color[rgb]{0,0,0}$\bigoplus$}%
}}}}
\put(4636,-151){\makebox(0,0)[lb]{\smash{{\SetFigFont{12}{14.4}{\familydefault}{\mddefault}{\updefault}{\color[rgb]{0,0,0}$\bigoplus$}%
}}}}
\put(3916,-1996){\makebox(0,0)[lb]{\smash{{\SetFigFont{12}{14.4}{\familydefault}{\mddefault}{\updefault}{\color[rgb]{0,0,0}$\bigoplus$}%
}}}}
\end{picture}%

}
\caption{The finite field adder MWRC}
\label{fig:ff-mwrc}
\end{figure}

Fig.~\ref{fig:ff-mwrc} depicts the MWRC considered in this paper, where there is no direct user-to-user link. Nodes 1, 2, $\dotsc$, $L$ are the users, and node $0$ is the relay. We consider full data exchange where each user is to decode the messages from all other users.  We denote by $X_i$ node $i$'s input to the channel, $Y_i$ the channel output received by node $i$, $W_i$ node $i$'s message, and $\underline{\hat{W}_i}$ node $i$'s estimate of all other users' messages.

The $L$-user finite field adder MWRC over the finite field $\mathcal{F}$ consists of the following:\\
$\bullet$ Uplink: $Y_0 = \left(\bigoplus\limits_{1 \leq i \leq L} X_i \right) \oplus N_0 \triangleq  X_1 \oplus X_2 \oplus \dotsm \oplus X_L \oplus N_0$,\\
$\bullet$ Downlink: $Y_i = X_0 \oplus N_i$, for each $i=1$, $2$, $\dotsc$, $L$,\\
where $X_i, Y_i, N_i \in \mathcal{F}$, $\forall i$, for some finite field $\mathcal{F}$, $\oplus$ is the addition operation associated with $\mathcal{F}$, $N_i$ are statistically independent for all $i$ and all channel uses.
Let $W_i \in \left\{1,2,\dotsc,2^{nR_i}\right\}$ be an $(nR_i)$-bit message, where $R_i$ is a rational number for every $1 \leq i \leq L$, and consider $n$ simultaneous uplink and downlink channel uses. User $i$'s transmit message at time $t$, $X_i[t]$, can only depend on its own message and its past received signals, i.e., $X_i[t] = f_{i,t}(W_i, Y_i[1],Y_i[2],\dotsc,Y_i[t-1])$, for $1 \leq t \leq n$. The relay's transmitted signal at any time can only depend on its past received signals, i.e., $X_0[t] = f_{0,t}(Y_0[1],Y_0[2],\dotsc,Y_0[t-1])$, for $1 \leq t \leq n$. After $n$ channel uses, user $i$ estimates the messages of all other nodes from its received signals and its own message, i.e., $\underline{\hat{W}_i} = g_i(\boldsymbol{Y}_i,W_i)$, where $\boldsymbol{Y}_i = (Y_i[1],Y_i[2],\dotsc,Y_i[n])$. Assume that the users' messages are independent and each $W_i$ is uniformly distributed over $\left\{1,2,\dotsc,2^{nR_i}\right\}$.  We say that the rate tuple $(R_1,R_2,\dotsc,R_L)$ is \emph{achievable} if there exists some $(n,\{f_{i,t}\}_{0 \leq i \leq L, 1 \leq t \leq n}, \{g_i\}_{1 \leq i \leq L})$  such that all users can \emph{reliably} decode the messages of all other users. We say that a user can decode a message reliably if the probability that it wrongly decodes the message can be made arbitrarily small. The \emph{capacity} is defined as the closure of all achievable rate tuples.

\section{A Capacity Upper Bound}\label{ub}

In this section, we derive an upper bound to the capacity of the finite field adder MWRC using cut-set arguments. A cut-set upper bound to the capacity of a network is the maximum rate that information can be transferred across a \emph{cut} separating two disjoint sets of nodes, assuming that all nodes on each side of the cut can fully cooperate.
We define $R_\text{min} = \min\limits_{1 \leq j \leq L} R_j$, $R_i^c = \sum\limits_{j=1,j\neq i}^LR_j$, and $R_\text{min}^c = \left(\sum\limits_{j=1}^L R_j \right) - R_\text{min}$.
The cut-set upper bound to the capacity of the finite field adder MWRC is given in the following theorem.
\begin{theorem}\label{theorem:upper-bound}
Consider an $L$-user finite field adder MWRC over $\mathcal{F}$. If the rate tuple $(R_1,R_2,\dotsc,R_L)$ is achievable, then
\begin{align}
R_\text{min}^c &\leq \log_2|\mathcal{F}| - H(N_0) \label{eq:mwrc-ub-2}\\
R_i^c &\leq \log_2|\mathcal{F}| - H(N_i), \label{eq:mwrc-ub-3}
\end{align}
for all $1 \leq i \leq L$.
\end{theorem}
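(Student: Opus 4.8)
The plan is to recognise both inequalities as instances of the cut-set bound, each arising from a carefully chosen cut, and then to evaluate the resulting single-letter mutual-information expression using the additive structure $Y = X \oplus N$. I will use the cut-set principle in the following form: for a memoryless network and any partition of the nodes into $S$ and $S^c$, the total rate of the messages that originate in $S$ and must be decoded by some node in $S^c$ is at most $\max I(X_S; Y_{S^c} \mid X_{S^c})$, the maximum being over the joint input distribution for a single channel use. I apply this to two cuts and show that the two right-hand sides collapse to $\log_2|\mathcal{F}| - H(N_i)$ and $\log_2|\mathcal{F}| - H(N_0)$ respectively.

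For \eqref{eq:mwrc-ub-3}, take $S^c = \{i\}$ and $S = \{0,1,\dots,L\}\setminus\{i\}$. Since user $i$ must decode $W_j$ for every $j \ne i$, the rate crossing the cut is exactly $R_i^c$. The only output on the $S^c$ side is $Y_i = X_0 \oplus N_i$, so the cut-set quantity is $I(X_S; Y_i \mid X_i) = H(Y_i \mid X_i) - H(Y_i \mid X_S, X_i)$. Conditioned on all channel inputs, $Y_i$ equals a constant $\oplus\, N_i$, whence the second term equals $H(N_i)$; the first is at most $\log_2|\mathcal{F}|$ because $Y_i$ takes values in $\mathcal{F}$. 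This yields \eqref{eq:mwrc-ub-3}.

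For \eqref{eq:mwrc-ub-2}, let $j^\star$ be any user with $R_{j^\star} = R_\text{min}$ and take $S = \{1,\dots,L\}\setminus\{j^\star\}$, $S^c = \{0, j^\star\}$. User $j^\star$ must decode every $W_j$ with $j \in S$, so the crossing rate is $\sum_{j\ne j^\star} R_j = R_\text{min}^c$; choosing the minimum-rate user is what makes this the binding uplink constraint, since the same cut for any other user $k$ gives only the weaker statement $R_k^c \le \log_2|\mathcal{F}| - H(N_0)$. The outputs on the $S^c$ side are $Y_0 = \bigl(\bigoplus_{j} X_j\bigr)\oplus N_0$ and $Y_{j^\star} = X_0 \oplus N_{j^\star}$. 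The key simplification is that $Y_{j^\star}$ contributes nothing: conditioned on $X_0$ it is independent of $X_S$, so $I(X_S; Y_0, Y_{j^\star} \mid X_0, X_{j^\star}) = I(X_S; Y_0 \mid X_0, X_{j^\star})$. The latter equals $H(Y_0 \mid X_0, X_{j^\star}) - H(N_0) \le \log_2|\mathcal{F}| - H(N_0)$, giving \eqref{eq:mwrc-ub-2}.

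The step I expect to demand the most care is the single-letterisation underlying the cut-set bound, because the channel has feedback: each user's input $X_j[t]$ depends on its past received symbols, and the relay's input $X_0[t]$ depends on the past uplink outputs, so the sequence $\boldsymbol{Y}_{j^\star}$ is correlated with the very messages $\boldsymbol{W}_S$ that $j^\star$ is decoding. Consequently one cannot naively assert a Markov chain $\boldsymbol{W}_S \to \boldsymbol{X}_0 \to \boldsymbol{Y}_{j^\star}$ over the whole block. The correct route is the standard cut-set argument: start from Fano's inequality $H(\boldsymbol{W}_S \mid \boldsymbol{Y}_0, \boldsymbol{Y}_{j^\star}, W_{j^\star}) \le n\epsilon_n$, expand $nR_\text{min}^c = H(\boldsymbol{W}_S \mid W_{j^\star})$ as a mutual information plus this Fano term, and single-letterise by conditioning on the current inputs $X_0[t], X_{j^\star}[t]$ and invoking per-use memorylessness (that $Y[t]$ is conditionally independent of the past given the current inputs). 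Only after this reduction does the clean per-symbol evaluation above apply; checking that the additive structure makes each single-letter term collapse to $\log_2|\mathcal{F}| - H(N_0)$, and analogously $-H(N_i)$, is then routine.
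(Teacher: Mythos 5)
Your proposal is correct and takes essentially the same approach as the paper: both inequalities come from the cut-set bound applied to the same two cuts ($\mathcal{S}^c = \{0,i\}$ for the uplink constraint and $\mathcal{S}^c = \{i\}$ for the downlink constraint), with the mutual-information terms collapsed via the additive structure and the independence of the noises exactly as in the paper's proof. The only cosmetic differences are that you bound the output entropies directly by $\log_2|\mathcal{F}|$ where the paper argues that independent uniform inputs simultaneously maximize all terms, and you spell out the feedback/single-letterization care that the paper delegates to the cited cut-set theorem of Cover and Thomas.
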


Here, $H(X)=-\sum_{x \in \mathcal{X}} p(x)\log_2p(x)$ is the entropy.

\begin{proof}[Proof of Theorem~\ref{theorem:upper-bound}]
Consider a network of $m$ nodes, in which node $i$ sends information at the rate $R_{i,j}$ (in bits/channel use) to node $j$. If the set of rates $\{R_{i,j}\}$  are achievable, there exists some joint probability distribution $p(x_1,x_2,\dotsc,x_m)$ such that $\sum_{i \in \mathcal{S}, j \in \mathcal{S}^c} R_{i,j} \leq I(X_\mathcal{S};Y_{\mathcal{S}^c} | X_{\mathcal{S}^c} )$, for all $\mathcal{S} \subset \{1,2,\dotsc,m\}$~\cite[p. 589]{coverthomas06}. Here $X_{\mathcal{S}} = \{X_i: i \in \mathcal{S}\}$, and $\mathcal{S}^c = \{1,2,\dotsc,m\} \setminus \mathcal{S}$.

First, consider the cut separating $\mathcal{S} = \{1,2,\dotsc,i-1,i+2,\dotsc,L\}$ and $\mathcal{S}^c = \{0,i\}$ in the MWRC, for some $1 \leq i \leq L$. An upper bound to the rate $R_i^c$ (of messages $(W_1,W_2,\dotsc,W_{i-1},W_{i+1},\dotsc,W_L)$) across the cut from $\mathcal{S}$ to $\mathcal{S}^c$ is therefore
\begin{subequations}
\begin{align}
&\sum\limits_{j=1,j\neq i}^LR_j = R_i^c \leq I(X_{[1,L]\setminus \{i\}};Y_{\{0,i\}}|X_{\{0,i\}}).\\
&= H(Y_0,Y_i|X_0,X_i ) - H(Y_0,Y_i|X_{[0,L]})\\
&= H(X_1 \oplus \dotsm \oplus X_{i-1} \oplus X_{i+1} \oplus \dotsm \oplus X_L \oplus N_0, N_i) \nonumber\\
&\quad - H(N_0,N_i)\\
&= H\left( \Bigg(\bigoplus_{j \in [1,L]\setminus\{i\}} X_j \Bigg) \oplus N_0 \right) - H(N_0),\label{eq:ub-1}
\end{align}
\end{subequations}
where \eqref{eq:ub-1} is because $\left((\bigoplus_{j \in [1,L]\setminus\{i\}} X_j) \oplus N_0\right)$ and $N_i$ are statistically independent, so are $N_0$ and $N_i$.

Next, consider the cut separating $\mathcal{S} = \{0,1,2,\dotsc,i-1,i+2,\dotsc,L\}$ and $\mathcal{S}^c = \{i\}$, for some $1 \leq i \leq L$. We have the following rate constraint
\begin{subequations}
\begin{align}
R_i^c &\leq I(X_{[0,L]\setminus\{i\}};Y_i|X_i)\\
&= H(X_0 \oplus N_i) - H(N_i). \label{eq:ub-2}
\end{align}
\end{subequations}

The rate constraints \eqref{eq:ub-1} and \eqref{eq:ub-2} must be satisfied for all $1 \leq i \leq L$ for some $p(x_0,x_1,\dotsc,x_L)$. Note that choosing the independent and uniform distribution for each $X_i$, for $0 \leq i \leq L$, simultaneously maximizes all the mutual information terms in the constraints. So, combining the above rate constraints, we have Theorem~\ref{theorem:upper-bound}. Note that \eqref{eq:mwrc-ub-2} implies \eqref{eq:ub-1} for all $1 \leq i \leq L$, since $R_\text{min}^c = \max\limits_{1 \leq j \leq L} R_j^c$.
\end{proof}

\section{An Optimal Linear Code for the Point-to-Point Finite Field Adder Channel} \label{linear}

Now, we consider the following linear code that maps a length-$k$ (row vector) message $\boldsymbol{s}\in \mathcal{F}^k$ to a length-$n$ (row vector) codeword $\boldsymbol{x} \in \mathcal{F}^n$:
\begin{equation}\label{eq:linear-codes-def-1}
\boldsymbol{x} = ( \boldsymbol{s} \odot \mathbb{G} ) \oplus \boldsymbol{q}
= \left( \boldsymbol{s} \odot \begin{bmatrix} \boldsymbol{g}_1 \\ \boldsymbol{g}_2 \\ \vdots \\ \boldsymbol{g}_k \end{bmatrix} \right) \oplus \boldsymbol{q},
\end{equation}
where $\odot$ is the multiplication associated with $\mathcal{F}$, 
$\mathbb{G}$ is a fixed $k \times n$ matrix, with each element independently and uniformly chosen over $\mathcal{F}$, the $i$-th row in $\mathbb{G}$, $\boldsymbol{g}_i$, is a row vector of length $n$, and $\boldsymbol{q}$ is a fixed row vector of length $n$, with each element independently and uniformly chosen over $\mathcal{F}$.

We extend the results for binary linear codes \cite[p. 206--207]{gallager68} to finite field linear codes in the following two lemmas.

\begin{lemma}\label{lemma:linear-codes-1}
Consider the linear codes defined in \eqref{eq:linear-codes-def-1}. Over the ensemble of codes, the probability that a message $\boldsymbol{s}_1$ is mapped to a given codeword $\boldsymbol{x}_1$ is $p(\boldsymbol{x}_1)=|\mathcal{F}|^{-n}$.
\end{lemma}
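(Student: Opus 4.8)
The plan is to condition on the random generator matrix $\mathbb{G}$ and then exploit the additive dither $\boldsymbol{q}$, which is uniform over $\mathcal{F}^n$ and independent of $\mathbb{G}$. First I would fix an arbitrary realization $\mathbb{G}=G$. Then $\boldsymbol{s}_1 \odot G$ is a deterministic vector $\boldsymbol{v} \in \mathcal{F}^n$, so that over the remaining randomness in $\boldsymbol{q}$ the codeword is $\boldsymbol{x} = \boldsymbol{v} \oplus \boldsymbol{q}$. The event that $\boldsymbol{s}_1$ is mapped to $\boldsymbol{x}_1$ is thus the event that $\boldsymbol{q}$ equals the unique vector satisfying $\boldsymbol{v} \oplus \boldsymbol{q} = \boldsymbol{x}_1$, namely $\boldsymbol{q} = (-\boldsymbol{v}) \oplus \boldsymbol{x}_1$, where $-\boldsymbol{v}$ denotes the coordinatewise additive inverse in $\mathcal{F}$.

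The key step is to observe that, because translation by a fixed field element is a bijection of $\mathcal{F}$ onto itself, and because the $n$ entries of $\boldsymbol{q}$ are independent and uniform, each coordinate of $\boldsymbol{x}$ is uniform over $\mathcal{F}$ and the coordinates remain jointly independent. Hence the single target vector $\boldsymbol{x}_1$ has conditional probability $P(\boldsymbol{x}=\boldsymbol{x}_1 \mid \mathbb{G}=G)=|\mathcal{F}|^{-n}$, and, crucially, this value does not depend on the choice of $G$.

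Finally I would remove the conditioning by averaging, $P(\boldsymbol{x}=\boldsymbol{x}_1)=\sum_{G} p(G)\,P(\boldsymbol{x}=\boldsymbol{x}_1\mid\mathbb{G}=G)=|\mathcal{F}|^{-n}\sum_{G} p(G)=|\mathcal{F}|^{-n}$, which is the claim. There is no genuine analytic obstacle here; the only point requiring care is the role of $\boldsymbol{q}$. Without the dither the lemma would fail (for instance, $\boldsymbol{s}_1=\boldsymbol{0}$ would map to $\boldsymbol{0}$ with probability one), so the argument must make explicit that it is the uniform, independent $\boldsymbol{q}$ — not the randomness of $\mathbb{G}$ — that equalizes the codeword distribution, and that this equalization holds for every fixed $G$ rather than merely on average over $\mathbb{G}$.
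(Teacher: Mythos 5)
Your proof is correct and is essentially the paper's argument in conditional form: the paper counts that for each fixed $\mathbb{G}$ exactly one dither $\boldsymbol{q}$ yields $\boldsymbol{x}_1$ (so $|\mathcal{F}|^{nk}$ favorable pairs out of $|\mathcal{F}|^{n(k+1)}$), which is precisely your observation that $P(\boldsymbol{x}=\boldsymbol{x}_1\mid\mathbb{G}=G)=|\mathcal{F}|^{-n}$ for every $G$, followed by averaging. Your closing remark that the uniform $\boldsymbol{q}$, not the randomness of $\mathbb{G}$, is what equalizes the codeword distribution is a nice clarification but does not change the substance.
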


\begin{proof}[Proof of Lemma~\ref{lemma:linear-codes-1}]
There are $|\mathcal{F}|^{n(k+1)}$ ways of selecting $\mathbb{G}$ and $\boldsymbol{q}$. As the elements are arbitrarily chosen, each unique $(\mathbb{G},\boldsymbol{q})$ has a probability of $|\mathcal{F}|^{-n(k+1)}$ of being selected. For any $\mathbb{G}$, there is only one $\boldsymbol{q}$ that results in the given $\boldsymbol{x}_1$. So, there are only $|\mathcal{F}|^{nk}$ different $(\mathbb{G},\boldsymbol{q})$ that map $\boldsymbol{s}_1$ to $\boldsymbol{x}_1$. Hence, $p(\boldsymbol{x}_1) = |\mathcal{F}|^{nk} |\mathcal{F}|^{-n(k+1)} = |\mathcal{F}|^{-n}$.
\end{proof}

\begin{lemma}\label{lemma:linear-codes-2}
Consider the linear codes defined in \eqref{eq:linear-codes-def-1}. Let $\boldsymbol{s}_1$ and $\boldsymbol{s}_2$ be two different messages. The corresponding codewords $\boldsymbol{x}_1 = (\boldsymbol{s}_1 \odot \mathbb{G}) \oplus \boldsymbol{q}$ and $\boldsymbol{x}_2 = (\boldsymbol{s}_2 \odot \mathbb{G}) \oplus \boldsymbol{q}$ are independent.
\end{lemma}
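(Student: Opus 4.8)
The plan is to prove the stronger statement that the pair $(\boldsymbol{x}_1,\boldsymbol{x}_2)$ is \emph{uniformly} distributed over $\mathcal{F}^n \times \mathcal{F}^n$ across the ensemble of codes, i.e.\ $p(\boldsymbol{x}_1 = \boldsymbol{a},\boldsymbol{x}_2 = \boldsymbol{b}) = |\mathcal{F}|^{-2n}$ for every $\boldsymbol{a},\boldsymbol{b}\in\mathcal{F}^n$. Since Lemma~\ref{lemma:linear-codes-1} already gives $p(\boldsymbol{x}_1 = \boldsymbol{a}) = p(\boldsymbol{x}_2 = \boldsymbol{b}) = |\mathcal{F}|^{-n}$, this immediately yields $p(\boldsymbol{x}_1,\boldsymbol{x}_2) = p(\boldsymbol{x}_1)\,p(\boldsymbol{x}_2)$, which is the desired independence. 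The structural fact that makes this tractable is that the encoder acts \emph{column by column}: writing $\boldsymbol{g}^{(j)}$ for the $j$-th column of $\mathbb{G}$ and $q_j$ for the $j$-th entry of $\boldsymbol{q}$, the $j$-th coordinates of the codewords are $x_{1,j} = (\boldsymbol{s}_1\odot\boldsymbol{g}^{(j)})\oplus q_j$ and $x_{2,j} = (\boldsymbol{s}_2\odot\boldsymbol{g}^{(j)})\oplus q_j$, and the pairs $(\boldsymbol{g}^{(j)},q_j)$ are mutually independent across $j$. So I would first reduce to a single column and then take a product over the $n$ columns.

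For a single column (dropping the index $j$), I study $x_1 = (\boldsymbol{s}_1\odot\boldsymbol{g})\oplus q$ and $x_2 = (\boldsymbol{s}_2\odot\boldsymbol{g})\oplus q$, where $\boldsymbol{g}\in\mathcal{F}^k$ and $q\in\mathcal{F}$ are independent and uniform, and $\boldsymbol{s}_i\odot\boldsymbol{g} = \bigoplus_{\ell=1}^k s_{i,\ell}\odot g_\ell$. Conditioning on $\boldsymbol{g}$, the event $\{x_1 = a, x_2 = b\}$ forces $q$ to equal both $a\ominus(\boldsymbol{s}_1\odot\boldsymbol{g})$ and $b\ominus(\boldsymbol{s}_2\odot\boldsymbol{g})$ (with $\ominus$ denoting subtraction in $\mathcal{F}$), which is consistent exactly when $a\ominus b = \boldsymbol{d}\odot\boldsymbol{g}$, where $\boldsymbol{d} = \boldsymbol{s}_1\ominus\boldsymbol{s}_2$; when consistent, a unique $q$ works, contributing $|\mathcal{F}|^{-1}$. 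Averaging over $\boldsymbol{g}$ then gives
\[ p(x_1 = a, x_2 = b) = |\mathcal{F}|^{-1}\, p\!\left(\boldsymbol{d}\odot\boldsymbol{g} = a\ominus b\right). \]

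The crux --- and what I expect to be the main obstacle --- is to show that $p(\boldsymbol{d}\odot\boldsymbol{g} = c) = |\mathcal{F}|^{-1}$ for every $c\in\mathcal{F}$. This is where the field hypothesis is essential: since $\boldsymbol{s}_1\neq\boldsymbol{s}_2$, the vector $\boldsymbol{d}$ is nonzero, so some coordinate $d_\ell$ is invertible; fixing the other coordinates of $\boldsymbol{g}$ and letting $g_\ell$ range uniformly, $\boldsymbol{d}\odot\boldsymbol{g}$ ranges uniformly over $\mathcal{F}$ (this step would fail over a general ring with zero divisors). Hence $p(x_1 = a, x_2 = b) = |\mathcal{F}|^{-2}$, so the single-column pair is uniform on $\mathcal{F}^2$. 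Finally, using the independence of the columns, I would multiply to obtain $p(\boldsymbol{x}_1 = \boldsymbol{a}, \boldsymbol{x}_2 = \boldsymbol{b}) = \prod_{j=1}^n |\mathcal{F}|^{-2} = |\mathcal{F}|^{-2n}$ and conclude as described above. Everything apart from the uniform-functional claim is routine bookkeeping.
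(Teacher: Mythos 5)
Your proof is correct, and it establishes the same stronger fact as the paper's own proof---that the pair $(\boldsymbol{x}_1,\boldsymbol{x}_2)$ is uniform on $\mathcal{F}^n\times\mathcal{F}^n$, whence independence follows by comparison with Lemma~\ref{lemma:linear-codes-1}---but the decomposition is genuinely different. The paper argues globally over the ensemble: it counts the pairs $(\mathbb{G},\boldsymbol{q})$ that map to a prescribed $(\boldsymbol{x}_1,\boldsymbol{x}_2)$ by pivoting on an entire row $\boldsymbol{g}_j$ of $\mathbb{G}$, where $j$ is a position in which $\boldsymbol{s}_1$ and $\boldsymbol{s}_2$ differ: fixing the other rows, exactly one $\boldsymbol{g}_j$ yields the required difference $\boldsymbol{x}_1 \oplus -\boldsymbol{x}_2 = (\boldsymbol{s}_1 \oplus -\boldsymbol{s}_2)\odot\mathbb{G}$, and then exactly one $\boldsymbol{q}$ yields $\boldsymbol{x}_1$, giving $|\mathcal{F}|^{n(k-1)}\cdot|\mathcal{F}|^{-n(k+1)}=|\mathcal{F}|^{-2n}$. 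You instead factorize column-by-column, condition on $\boldsymbol{g}$ and average out $q$, and isolate the key sub-lemma that a nonzero linear form $\boldsymbol{d}\odot\boldsymbol{g}$ of a uniform vector is uniform on $\mathcal{F}$. The two pivots are the same algebraic fact at different granularity: the paper's unique solvability for the row $\boldsymbol{g}_j$ is precisely the invertibility of the nonzero field element $s_{1,j}\oplus -s_{2,j}$ that you invoke entrywise. What your route buys is twofold: it makes explicit where the field hypothesis enters (and why the step would fail over a ring with zero divisors), a point the paper leaves implicit, and it directly delivers the per-codeletter i.i.d.\ structure of codeword pairs, which is the form in which the lemmas are actually consumed in the sketch of Theorem~\ref{theorem:optimal-code}. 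What the paper's global count buys is brevity: one counting argument, no per-column reduction.
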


\begin{proof}[Proof of Lemma~\ref{lemma:linear-codes-2}]
To show independence, we need to find the probabilities $p(\boldsymbol{x}_1)$ and $p(\boldsymbol{x}_2|\boldsymbol{x}_1)$. Equivalently, we find the probabilities $p(\boldsymbol{x}_1 \oplus -\boldsymbol{x}_2)$ and $p(\boldsymbol{x}_1|\boldsymbol{x}_1 \oplus -\boldsymbol{x}_2)$, where $-\boldsymbol{x}_2$ is the \emph{additive inverse} of $\boldsymbol{x}_2$ in $\mathcal{F}$. Let $\boldsymbol{s}_1$ and $\boldsymbol{s}_2$ differ in the $j$-th position (they may differ, additionally, in other positions). So, $\boldsymbol{x}_1 \oplus -\boldsymbol{x}_2 = (\boldsymbol{s}_1 \oplus -\boldsymbol{s}_2) \odot \mathbb{G}$. For any $(\boldsymbol{g}_1,\dotsc,\boldsymbol{g}_{j-1},\boldsymbol{g}_{j+1},\dotsc,\boldsymbol{g}_k)$, there is only one $\boldsymbol{g}_j$ that results in the given $(\boldsymbol{x}_1 \oplus -\boldsymbol{x}_2)$. Hence, there are only $|\mathcal{F}|^{n(k-1)}$ different $\mathbb{G}$'s that give $(\boldsymbol{x}_1 \oplus  - \boldsymbol{x}_2)$. In addition, for any chosen $\mathbb{G}$, there is only one $\boldsymbol{q}$ that results in the given $\boldsymbol{x}_1$. So, there are only $|\mathcal{F}|^{n(k-1)}$ unique $(\mathbb{G},\boldsymbol{q})$'s that give the desired $\boldsymbol{x}_1$ and $\boldsymbol{x}_2$. So, the probability $p(\boldsymbol{x}_1,\boldsymbol{x}_2) = |\mathcal{F}|^{n(k-1)} |\mathcal{F}|^{-n(k+1)} = |\mathcal{F}|^{-2n} = p(\boldsymbol{x}_1)p(\boldsymbol{x}_2)$.
\end{proof}

With the above lemmas, we have the following theorem:
\begin{theorem}\label{theorem:optimal-code}
Consider the finite field adder channel
\begin{equation}
Y = X \oplus N, \label{eq:ff-ptp-channel}
\end{equation}
where $Y,X,N \in \mathcal{F}$, where $X$ is the channel input, $Y$ is the channel output, $N$ is independent and identically distributed (i.i.d.) noise for each channel use.
A transmitter sends a message $\boldsymbol{s} \in \mathcal{F}^k$ over $n$ uses of the channel~\eqref{eq:ff-ptp-channel} using the linear code in~\eqref{eq:linear-codes-def-1}. The receiver can reliably decode the message from the $n$ received signals $\boldsymbol{Y}$ if $n$ is sufficiently large and if
\begin{equation}\label{eq:linear}
(k\log_2|\mathcal{F}|)/n < \log_2 |\mathcal{F}| - H(N).
\end{equation}
\end{theorem}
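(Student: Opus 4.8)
The plan is to establish Theorem~\ref{theorem:optimal-code} via a random-coding argument over the code ensemble \eqref{eq:linear-codes-def-1}, using a standard typicality decoder together with the pairwise-independence structure that Lemmas~\ref{lemma:linear-codes-1} and~\ref{lemma:linear-codes-2} have just furnished. First I would fix a small $\epsilon > 0$ and let the receiver decode by searching for the unique message $\boldsymbol{s}$ whose codeword $\boldsymbol{x}(\boldsymbol{s})$ is jointly typical with the received vector $\boldsymbol{Y}$ with respect to the channel law induced by $Y = X \oplus N$ under the uniform input distribution on $\mathcal{F}$. By Lemma~\ref{lemma:linear-codes-1}, each codeword is marginally uniform over $\mathcal{F}^n$, so the transmitted codeword $\boldsymbol{x}(\boldsymbol{s}_1)$ is jointly typical with $\boldsymbol{Y}$ with probability approaching $1$ by the law of large numbers (equivalently, the joint AEP); this handles the ``miss'' error event.

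Next I would bound the probability of a ``confusion'' error, namely that some incorrect codeword $\boldsymbol{x}(\boldsymbol{s}_2)$ with $\boldsymbol{s}_2 \neq \boldsymbol{s}_1$ is also jointly typical with $\boldsymbol{Y}$. The key point is that Lemma~\ref{lemma:linear-codes-2} guarantees $\boldsymbol{x}(\boldsymbol{s}_2)$ is statistically independent of $\boldsymbol{x}(\boldsymbol{s}_1)$, and hence independent of $\boldsymbol{Y}$ (which is a function of $\boldsymbol{x}(\boldsymbol{s}_1)$ and the independent noise $\boldsymbol{N}$). Combined with the marginal uniformity from Lemma~\ref{lemma:linear-codes-1}, this means each wrong codeword behaves exactly like an independent uniform draw, so the probability that a fixed wrong codeword is jointly typical with $\boldsymbol{Y}$ is at most $2^{-n(I(X;Y) - \delta(\epsilon))}$, where $I(X;Y) = \log_2|\mathcal{F}| - H(N)$ is the mutual information under the uniform input. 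A union bound over the fewer than $|\mathcal{F}|^{k} = 2^{k\log_2|\mathcal{F}|}$ competing messages then gives a total confusion probability bounded by $2^{-n(\log_2|\mathcal{F}| - H(N) - \delta(\epsilon) - (k\log_2|\mathcal{F}|)/n)}$, which tends to zero precisely when the rate inequality \eqref{eq:linear} holds and $\epsilon$ (hence $\delta(\epsilon)$) is taken small enough. Since the average error probability over the ensemble vanishes, at least one fixed choice of $(\mathbb{G},\boldsymbol{q})$ achieves arbitrarily small error, establishing the claim.

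I expect the main obstacle to be the confusion-error step, and specifically the justification that pairwise independence of codewords suffices in place of the full mutual independence used in Shannon's classical random-coding argument. The union bound only ever compares the true codeword against one wrong codeword at a time, so pairwise independence (Lemma~\ref{lemma:linear-codes-2}) is in fact exactly the right tool, but one must argue carefully that $\boldsymbol{x}(\boldsymbol{s}_2)$ is independent not merely of $\boldsymbol{x}(\boldsymbol{s}_1)$ but of the pair $(\boldsymbol{x}(\boldsymbol{s}_1), \boldsymbol{N})$, so that it is independent of $\boldsymbol{Y}$; this follows because the noise $\boldsymbol{N}$ is generated independently of the code, but it is worth stating explicitly. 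A secondary point to verify is that the optimizing input distribution is indeed uniform on $\mathcal{F}$, which holds because $I(X; X\oplus N) = H(X\oplus N) - H(N)$ is maximized when $X \oplus N$ is uniform, achieved by uniform $X$; this both identifies the capacity as $\log_2|\mathcal{F}| - H(N)$ and matches the marginal codeword distribution delivered by Lemma~\ref{lemma:linear-codes-1}, so the ensemble is automatically tuned to the capacity-achieving input.
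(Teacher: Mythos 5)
Your proposal is correct and takes essentially the same approach as the paper: the paper's (sketch of) proof likewise runs the standard random-coding/typicality analysis of the channel coding theorem over the ensemble \eqref{eq:linear-codes-def-1}, using Lemma~\ref{lemma:linear-codes-1} for the uniform marginal codeword distribution and Lemma~\ref{lemma:linear-codes-2} for pairwise independence, with the uniform input giving $I(X;Y)=\log_2|\mathcal{F}|-H(N)$. Your write-up merely fills in details the paper leaves implicit, in particular why pairwise independence suffices under the union bound and why each wrong codeword is independent of $\boldsymbol{Y}$.
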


\begin{proof}[Sketch of proof for Theorem~\ref{theorem:optimal-code}]
From Lemma~\ref{lemma:linear-codes-1} we know that for the code defined in \eqref{eq:linear-codes-def-1}, for any codeword, each \emph{codeletter} is uniform and i.i.d.. From Lemma~\ref{lemma:linear-codes-2}, we know that any pair of codewords are independent of each other. Using these two facts, we can repeat the analysis of the probability of error in the proof of the channel coding theorem~\cite[p. 201--204]{coverthomas06} to show that the receiver can decode the message $\boldsymbol{s}$ from the $n$ received signals $\boldsymbol{Y}$ with an arbitrarily small error probability if $n$ is sufficiently large and if $\frac{k\log_2|\mathcal{F}|}{n} < I(X;Y)$, where $X$ is uniformly distributed.
\end{proof}

\section{Functional-Decode-Forward with Rate Splitting and Joint Source-Channel Coding}\label{fdf}
In this section we derive an achievable rate region using the linear code derived in the previous section. Consider each user $i$, for $1 \leq i \leq L$, sending $T$ messages (of $nR_i$ bits each), denoted by $(W_i[1], W_i[2], \dotsc, W_i[T])$. Consider a total of $(T+1)n$ channel uses. Since we consider full data exchange, user $i$ needs to decode the messages sent by all the other users, i.e., $\big\{W_{j}[t]: \forall j \in [1,L] \setminus \{i\}, \forall t \in [1,T] \big\}$. Define each $n$ channel uses as a block. In the $t$-th block, for $1 \leq t \leq T$, each user $i$ sends $\boldsymbol{X}_i(W_i[t])$ on the uplink. In the $(t+1)$-th block, for $1 \leq t \leq T$, the relay transmits $\boldsymbol{X}_0$, a function of its received signals in the $t$-th block, on the downlink. At the end of the $(t+1)$-th block, each user $i$ then decodes the $t$-th message of all other users, i.e., $(W_1[t], \dotsc, W_{i-1}[t], W_{i+1}[t], \dotsc, W_L[t])$. So, for each pair of the $t$-th block on the uplink and the $(t+1)$-th block on the downlink, if each user can reliably decode the $t$-th message of all other users, then repeating the same coding scheme for all $1 \leq t \leq T$, all users can reliably decode the messages from all other users in all blocks. This means that the rate tuple $\left(\frac{TnR_1}{(T+1)n},\frac{TnR_2}{(T+1)n},\dotsc,\frac{TnR_L}{(T+1)n}\right)$ is achievable. For any $n$, $R_1$, $R_2$, $\dotsc$, $R_L$, we can choose a sufficiently large $T$ such that the achievable rate tuple is arbitrarily close to $(R_1,R_2,\dotsc,R_L)$. In this section, we derive constraints on $R_1$, $R_2$, $\dotsc$, $R_L$ such that the rate tuple is achievable.

Since the encoding and decoding functions for all nodes are repeated in each block, we focus on the first block on the uplink and the second block on the downlink. For simplicity, we denote $W_i[1]$ by $W_i$ in the rest of this section.

\subsection{Uplink}\label{uplink}

Recall that $R_i^c = \sum_{j=1, j \neq i}^LR_j$, $R_\text{min} = \min\limits_{1 \leq j \leq L} R_j$, and $R_\text{min}^c = \left(\sum_{j=1}^L R_j\right) - R_\text{min}$. 
For the uplink of the MWRC, we use the idea of FDF in \cite{ongjohnsonkellett10cl} and rate splitting. Let $R_i = R_\text{min} + R_i'$. So, each message $W_i$ can be split into $W_i = (A_i,B_i)$, where $A_i$ is $nR_\text{min}$ bits long and $B_i$ is $nR_i'$ bits long. Let $D$, $0 \leq D  < L$, be the number of users whose message is strictly more than $nR_\text{min}$ bits long. Let these users be $\{d_1,d_2,\dotsc,d_D\} \triangleq \mathcal{D} = \{j: R_j' > 0\}$. So, for all users $i \notin \mathcal{D}$, $W_i = A_i$ and $R_i'=0$.

The $n$ uplink channel uses are further split into $(L+D-1)$ sub-blocks. Each of the $t$-th sub-blocks for $1 \leq t \leq L-1$ consists of $nR_\text{min}/R_\text{min}^c$ channel uses\footnote[2]{Since $R_\text{min}$, $R_\text{min}^c$, and $R_{d_{t-L+1}}'$ are rational numbers, there exists a (possibly large) $n$ such that $nR_\text{min}/R_\text{min}^c$ and $nR_{d_{t-L+1}}'/R_\text{min}^c$ are integers.}. The $t$-th block for $L \leq t \leq L+D-1$ consists of $nR_{d_{t-L+1}}'/R_\text{min}^c$ channel uses\footnotemark[2]. Note that if we the sum the number of channel uses in all sub-blocks, we get $(L-1)nR_\text{min}/R_\text{min}^c + n\sum_{d \in \mathcal{D}}R_d'/R_\text{min}^c = n[\sum_{j=1}^L (R_\text{min} + R_j')- R_\text{min}]/R_\text{min}^c = n$.

In the $t$-th sub-block for $1 \leq t \leq L-1$, only two users transmit, using the linear code defined in \eqref{eq:linear-codes-def-1}:
\begin{equation}
\boldsymbol{X}_i = \begin{cases}
(\boldsymbol{s}(A_i) \odot \mathbb{G}_A) \oplus \boldsymbol{q}_{A,i}, &\text{if } i=t \text{ or } t+1\\
\boldsymbol{\mathfrak{0}}, &\text{otherwise},
\end{cases}
\end{equation}
where each $\boldsymbol{s}(A_i)$ is a row vector of length $k_A$, $\mathbb{G}_A$ is a fixed $k_A \times nR_\text{min}/R_\text{min}^c$ matrix, each $\boldsymbol{X}_i$ and $\boldsymbol{q}_{A,i}$ is a row vector of length $nR_\text{min}/R_\text{min}^c$, and $\boldsymbol{\mathfrak{0}}$ is the all-zero row vector (where ``zero'', $\mathfrak{0} \in \mathcal{F}$, is the additive identity of the field $\mathcal{F}$). If we say that a user $i$ \emph{does not transmit}, it sends $X_i =\mathfrak{0}$. 
$k_A$ is chosen such that
\begin{equation}\label{eq:mapping-1}
(k_A\log_2|\mathcal{F}|)/n \geq R_\text{min},
\end{equation}
so that we can define an injective (one-to-one) function that maps each $A_i$ (of $nR_\text{min}$ bits) to a unique $\boldsymbol{s}(A_i) \in \mathcal{F}^{k_A}$.


In the $t$-th sub-block for $L \leq t \leq L+D-1$, only one user, $d_{t-L+1} \in \mathcal{D}$, transmits using the linear code defined in \eqref{eq:linear-codes-def-1}:
\begin{align}
\boldsymbol{X}_i = \begin{cases}
(\boldsymbol{s}(B_i) \odot \mathbb{G}_{B,i}) \oplus \boldsymbol{q}_{B,i}. &\text{if } i = d_{t-L+1}\\
\boldsymbol{\mathfrak{0}}, &\text{otherwise},
\end{cases}
\end{align}
where $\boldsymbol{s}(B_{d_{t-L+1}})$ is a row vector of length $k_{B,d_{t-L+1}}$, $\mathbb{G}_{B,d_{t-L+1}}$ is a fixed $k_{B,d_{t-L+1}} \times nR_{d_{t-L+1}}'/R_\text{min}^c$ matrix, and each $\boldsymbol{X}_{d_{t-L+1}}$ and  $\boldsymbol{q}_{B,d_{t-L+1}}$ is a fixed row vector of length $nR_{d_{t-L+1}}'/R_\text{min}^c$.
Similarly, $k_{B,d_{t-L+1}}$ is chosen such that
\compress
\begin{equation}\label{eq:mapping-2}
(k_{B,d_{t-L+1}} \log_2|\mathcal{F}|)/n \geq R_{d_{t-L+1}}',
\end{equation}
so we can define an injective function that maps each $B_{d_{t-L+1}}$ (of $nR_{d_{t-L+1}}'$ bits) to a unique $\boldsymbol{s}(B_{d_{t-L+1}}) \in \mathcal{F}^{k_{B,d_{t-L+1}}}$.


Each element in $\mathbb{G}_A$, $\mathbb{G}_{B,d_{t-L+1}}$, $\boldsymbol{q}_{A,i}$, and $\boldsymbol{q}_{B,d_{t-L+1}}$  is independently and uniformly chosen over $\mathcal{F}$, and is fixed for all transmissions. 

In the $t$-th sub-block for $1 \leq t \leq L-1$, the relay receives $\boldsymbol{Y}_0 = \boldsymbol{X}' \oplus \boldsymbol{N}_0$, where
\compress
\begin{equation}
\boldsymbol{X}' = \Big([\boldsymbol{s}(A_t) \oplus \boldsymbol{s}(A_{t+1})] \odot \mathbb{G}_A \Big) \oplus (\boldsymbol{q}_{A,t} \oplus \boldsymbol{q}_{A,t+1}),
\end{equation}
which is also a linear codeword of the form \eqref{eq:linear-codes-def-1}. From Theorem~\ref{theorem:optimal-code}, if $nR_\text{min}/R_\text{min}^c$ is large enough and if
\compress
\begin{equation}
\frac{k_A \log_2|\mathcal{F}|}{nR_\text{min}/R_\text{min}^c} < \log_2|\mathcal{F}| - H(N_0),\label{eq:uplink-1}
\end{equation}
then the relay can reliably decode the ``message'' $\boldsymbol{s}(A_t) \oplus \boldsymbol{s}(A_{t+1}) \triangleq \boldsymbol{s}(A_{t,t+1})$.

In the $t$-th sub-block for $L \leq t \leq L+D-1$, since only one user transmits, we directly apply Theorem~\ref{theorem:optimal-code}. So, if 
\compress
\begin{equation}
\frac{k_{B,d_{t-L+1}} \log_2|\mathcal{F}|}{nR_{d_{t-L+1}}'/R_\text{min}^c} < \log_2|\mathcal{F}| - H(N_0), \label{eq:uplink-2}
\end{equation}
then the relay can reliably decode $\boldsymbol{s}(B_{d_{t-L+1}})$.


Define $U \triangleq \big(\boldsymbol{s}(A_{1,2}), \boldsymbol{s}(A_{2,3}), \dotsc, \boldsymbol{s}(A_{L-1,L}), \boldsymbol{s}(B_{d_1}),$ $\boldsymbol{s}(B_{d_2}),\dotsc,  \boldsymbol{s}(B_{d_D})\big)$. On the uplink, if
\compress
\begin{equation}
R_\text{min}^c < \log_2|\mathcal{F}| - H(N_0), \label{eq:uplink-3}
\end{equation}
we can always find sufficiently large $n$, $k_A$, and $\{k_{B,d}\}_{d \in \mathcal{D}}$, so that \eqref{eq:mapping-1}, \eqref{eq:uplink-1} and \eqref{eq:mapping-2}, \eqref{eq:uplink-2} can be satisfied in their respective sub-blocks. Hence, the relay can reliably decode $U$.


\subsection{Downlink}\label{downlink}

Assume that the relay has correctly decoded $U$. Using the strategy of joint source-channel decoding over broadcast channels~\cite{tuncel06}, the relay re-encodes $U$ and sends it on $n$ downlink channel uses. Each user $i$, for $i \in \mathcal{D}$, uses its \emph{side information} $\boldsymbol{s}(B_i)$ to decode $U$ (hence joint source-channel decoding). The users do not need to use their respective $A_i$ in the decoding, as each $A_i$ conveys little information about $U$. All users can reliably decode $U$ if~\cite[Theorem 6]{tuncel06}
\compress
\begin{align}
H(U|\boldsymbol{s}(B_i)) < nI(X_0;Y_i), \quad &\forall i \in \mathcal{D} \label{eq:down-mwrc-1}\\
H(U) < nI(X_0;Y_i),\quad &\forall i \notin \mathcal{D}, \label{eq:down-mwrc-2}
\end{align}
for some $p(x_0)$. Choosing the uniform distribution for $X_0$, $I(X_0;Y_i) = \log_2|\mathcal{F}| - H(N_i)$.

Since the mapping from $B_i$ (a random $nR_i'$-bit message) to $\boldsymbol{s}(B_i)$ is injective, $H(\boldsymbol{s}(B_i)) = H(B_i) = nR_i'$. Since $\boldsymbol{s}(A_{i,i+1}) \in \mathcal{F}^{k_A}$, $H(\boldsymbol{s}(A_{i,i+1})) \leq k_A\log_2|\mathcal{F}|$, with equality iff $\boldsymbol{s}(A_{i,i+1})$ is uniformly distributed in $\mathcal{F}^{k_A}$. From Sec.~\ref{uplink},  $(k_A\log_2|\mathcal{F}|)/n$ can be chosen arbitrarily close to $R_\text{min}$. This gives $\frac{1}{n}H(U) \leq \frac{1}{n} [\sum_{i=1}^{L-1}H(\boldsymbol{s}(A_{i,i+1})) + \sum_{d \in \mathcal{D}}\boldsymbol{s}(B_d)]  \leq (L-1)R_\text{min} + \sum_{d \in \mathcal{D}} R_d' = R_\text{min}^c$, and $\frac{1}{n}H(U|\boldsymbol{s}(B_i)) \leq R_\text{min}^c - R_i' = ([\sum_{j=1}^LR_j] - R_\text{min} - R_i') =R_i^c$.
Note that for all $i \notin \mathcal{D}$, $R_i'=0$ and hence $R_i^c = R_\text{min}^c$.
So, if 
\compress
\begin{equation}
R_i^c < \log_2|\mathcal{F}| - H(N_i), \text{ for all } 1 \leq i \leq L, \label{eq:capacity-mwrc-2}
\end{equation}
then \eqref{eq:down-mwrc-1} and \eqref{eq:down-mwrc-2} can both be satisfied. 
Note that on the downlink, linear codes are not required.

\subsection{The Capacity of the Binary MWRC}

If the rate constraints \eqref{eq:uplink-3} and \eqref{eq:capacity-mwrc-2} are satisfied, all users are able to decode $U$ reliably. Each user $i$ then performs:
\compress
\begin{align}
&\boldsymbol{s}(A_{i+1}) = \boldsymbol{s}(A_{i,i+1}) \oplus -\boldsymbol{s}(A_i),\nonumber\\
&\boldsymbol{s}(A_{i+2}) = \boldsymbol{s}(A_{i+1,i+2}) \oplus -\boldsymbol{s}(A_{i+1}),\;\;\dotsm, \nonumber \\
&\boldsymbol{s}(A_L) = \boldsymbol{s}(A_{L-1,L}) \oplus -\boldsymbol{s}(A_{L-1}),\nonumber\\
&\boldsymbol{s}(A_{i-1}) = \boldsymbol{s}(A_{i-1,i}) \oplus -\boldsymbol{s}(A_i), \nonumber\\
&\boldsymbol{s}(A_{i-2}) = \boldsymbol{s}(A_{i-2,i-1}) \oplus -\boldsymbol{s}(A_{i-1}),\;\; \dotsm,\nonumber\\
&\boldsymbol{s}(A_{1}) = \boldsymbol{s}(A_{1,2}) \oplus -\boldsymbol{s}(A_2), \label{eq:final-decoding}
\end{align}
and obtains $(A_1,A_2,\dotsc,A_{i-1},A_{i+1},\dotsc,A_L)$. Combining these with $(B_{d_1},B_{d_2},\dotsc, B_{d_D})$, each user $i$ can reliably recover the messages of all other users, i.e., $(W_1,W_2,\dotsc,W_{i-1},W_{i+1},\dotsc,W_L)$. 

So, all rate tuples $(R_1,R_2,\dotsc,R_L)$ satisfying \eqref{eq:uplink-3} and \eqref{eq:capacity-mwrc-2} are achievable. Since the closure of this region coincides with the capacity upper bound given in Theorem~\ref{theorem:upper-bound}, we have:
\begin{theorem}\label{theorem:mwrc-capacity}
Consider an $L$-user finite field adder MWRC over $\mathcal{F}$. The capacity is given by all rate tuples $(R_1,R_2,\dotsc,R_L)$ that satisfy \eqref{eq:mwrc-ub-2} and \eqref{eq:mwrc-ub-3} for all $1 \leq i \leq L$.
\end{theorem}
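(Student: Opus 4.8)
The statement is a capacity characterization, so my plan is to prove it in the two standard directions and then verify that the two resulting regions coincide. The converse direction is immediate: by Theorem~\ref{theorem:upper-bound}, every achievable tuple $(R_1,\dotsc,R_L)$ must satisfy \eqref{eq:mwrc-ub-2} and \eqref{eq:mwrc-ub-3}, so the capacity region is contained in the set named in the statement. The only real work is therefore the forward (achievability) direction, for which I would assemble the scheme already constructed in Sec.~\ref{fdf}.

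For achievability I would fix any $(R_1,\dotsc,R_L)$ satisfying the strict inequalities \eqref{eq:uplink-3} and \eqref{eq:capacity-mwrc-2} and argue in three stages. On the uplink, the rate-split plus FDF construction reduces the relay's task to decoding the single compound symbol $U$; since each pairwise sum $\boldsymbol{s}(A_t)\oplus\boldsymbol{s}(A_{t+1})$ is again a codeword of the form \eqref{eq:linear-codes-def-1}, Theorem~\ref{theorem:optimal-code} applies sub-block by sub-block, and \eqref{eq:uplink-3} guarantees the relay recovers $U$ reliably. On the downlink, the joint source--channel coding of \cite{tuncel06} lets every user decode $U$ once \eqref{eq:down-mwrc-1}--\eqref{eq:down-mwrc-2} hold, and the entropy computations reduce those to \eqref{eq:capacity-mwrc-2}. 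Finally, the peeling recursion \eqref{eq:final-decoding} lets each user $i$ strip off every $\boldsymbol{s}(A_j)$ and, together with the $B_d$, recover all other users' messages. The block-Markov pipelining over $T$ blocks costs only the factor $T/(T+1)\to 1$, so every tuple satisfying \eqref{eq:uplink-3} and \eqref{eq:capacity-mwrc-2} lies in the closure of the achievable region.

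It then remains to match the two regions. The achievability constraints \eqref{eq:uplink-3} and \eqref{eq:capacity-mwrc-2} are identical to the converse constraints \eqref{eq:mwrc-ub-2} and \eqref{eq:mwrc-ub-3} except that ``$<$'' replaces ``$\le$'' (recall that in Theorem~\ref{theorem:upper-bound} the $L$ relay-side cut bounds were already consolidated into the single constraint \eqref{eq:mwrc-ub-2} via $R_\text{min}^c=\max_{1\le j\le L}R_j^c$). Since the capacity is defined as the closure of the achievable region and the converse region is a polyhedron with nonempty interior, the closure of the open achievable region equals the closed converse region, which completes the proof.

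The main obstacle is not in this final assembly, which is essentially bookkeeping and a routine closure argument, but in the ingredients already in place: the FDF observation that the relay need only decode the sum $U$ rather than the individual messages, so that Theorem~\ref{theorem:optimal-code} can be invoked on a single linear codeword, and the correct accounting of $H(U)/n$ and $H(U\mid\boldsymbol{s}(B_i))/n$ down to $R_\text{min}^c$ and $R_i^c$ so that the downlink condition collapses exactly to \eqref{eq:capacity-mwrc-2}. For the theorem itself, the only point requiring care is confirming that the strict achievability inequalities suffice to fill out the closed converse region.
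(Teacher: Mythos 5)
Your proposal is correct and follows essentially the same route as the paper: the converse is exactly Theorem~\ref{theorem:upper-bound}, the achievability is the Sec.~\ref{fdf} scheme (rate splitting plus FDF on the uplink via Theorem~\ref{theorem:optimal-code}, Tuncel's joint source--channel coding on the downlink reducing \eqref{eq:down-mwrc-1}--\eqref{eq:down-mwrc-2} to \eqref{eq:capacity-mwrc-2}, then the peeling recursion \eqref{eq:final-decoding}), and the final step is the same closure argument reconciling the strict inequalities \eqref{eq:uplink-3}, \eqref{eq:capacity-mwrc-2} with the closed region \eqref{eq:mwrc-ub-2}, \eqref{eq:mwrc-ub-3}. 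If anything, you are slightly more careful than the paper in flagging that the closure step needs the region to have nonempty interior.
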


\begin{remark}
The capacity-achieving FDF does not utilize the users' received signals in their transmission. Hence, \emph{feedback} does not increase the capacity of the finite field adder MWRC.
\end{remark}

\subsection{A Note on the Common-Rate Capacity}
If we consider only the common rate, $R = R_i$, $\forall i$, we have $W_i = A_i$ and $B_i = \varnothing$, $\forall i$. In this case, rate splitting is not required on the uplink to get \eqref{eq:uplink-3}. Furthermore, on the downlink, since $U = (\boldsymbol{s}(A_{1,2}), \boldsymbol{s}(A_{2,3}), \dotsc, \boldsymbol{s}(A_{L-1,L}))$ has no correlation with any $W_i$, utilizing $W_i$ does not help the user in decoding $U$. On the downlink, the relay encodes $U$, of $n(L-1)R$ bits, and transmits it in $n$ channel uses. Treating the downlink from the relay to each user $i$ as a point-to-point channel~\cite[p. 200]{coverthomas06}, if $n(L-1)R < nI(X_0;Y_i)$, then user $i$ can reliably decode $U$ from its received signals without needing to use its own message (separate source-channel decoding). Hence, we get \eqref{eq:capacity-mwrc-2}. Of course, after decoding $U$, each user needs to use its message to obtain the other users' messages using the steps in  \eqref{eq:final-decoding}. But as far as channel decoding is concerned, the source messages need not be used.
So, if we are only interested in the common rate case, FDF without rate splitting and separate source-channel coding is optimal (capacity-achieving) for the finite field adder MWRC.

\section{Comparison of Coding Strategies}\label{compare}
Now, we compare three coding strategies for the special case when $L=2$ and $\mathcal{F} = \{0,1\}$, i.e., the binary TWRC. For binary $N_i$, we denote $\Pr\{N_i=1\} = \rho_i$ and $H(\alpha)= - \alpha\log_2\alpha - (1-\alpha)\log_2(1-\alpha)$.

\subsection{FDF with joint source-channel coding}
From Theorem~\ref{theorem:mwrc-capacity}, FDF with rate splitting and joint source-channel coding achieves the capacity given by $\{(R_1,R_2): R_1,R_2 \leq 1 - H(\rho_0), R_1 \leq 1 - H(\rho_2), R_2 \leq 1 - H(\rho_1) \}$. The capacity of the binary TWRC was reported in \cite{knopp07,namchung08}.

\subsection{FDF with separate source-channel coding}

Now, we find the achievable rate region using FDF with rate splitting but with \emph{separate} source-channel coding. The coding on the uplink is the same as that in  Sec.~\ref{uplink}. Assuming $R_2 \geq R_1$, we have $W_1 = A_1$ and $W_2 = (A_2,B_2)$. So, on the uplink, if $R_2 < 1 - H(\rho_0)$, then the relay can reliably decode $(\boldsymbol{s}(A_{1,2}),\boldsymbol{s}(B_2))$. Instead of using the joint source-channel coding for the downlink described in Sec.~\ref{downlink}, we re-cast the downlink as a \emph{broadcast channel with degraded message sets}, where the relay broadcasts a common message $\boldsymbol{s}(A_{1,2})$ to both the users, and a private message $\boldsymbol{s}(B_2)$ to user 1, and the users do not use their own messages for decoding  $\boldsymbol{s}(A_{1,2})$ and $\boldsymbol{s}(B_2)$ (hence separate source-channel decoding). 
From \cite{kornermarton77}, if $R_1 < 1 - H\big(\beta(1-\rho_2) + (1-\beta)\rho_2\big)$, $R_2' < H\big(\beta(1-\rho_1) + (1-\beta)\rho_1\big) - H(\rho_1)$, and $R_1 + R_2' < 1 - H(\rho_1)$, for some $0 \leq \beta \leq \frac{1}{2}$, then both users can reliably decode $\boldsymbol{s}(A_{1,2})$ and user 1 can reliably decode $\boldsymbol{s}(B_2)$ purely from their respective $\boldsymbol{Y}_i$. The users then follow the steps in \eqref{eq:final-decoding} to obtain the other user's message. Repeating this for the case $R_1 \geq R_2$, the achievable rate region is the convex hull of:
\begin{itemize}
\item $\mathcal{R}_1$: all rate pairs $(R_1,R_1 + R_2')$ satisfying
\vspace{-1ex}
\begin{align}
R_1 &< 1 - H\big(\beta(1-\rho_2) + (1-\beta)\rho_2\big) \label{eq:beta-r1}\\
R_2' &< H\big(\beta(1-\rho_1) + (1-\beta)\rho_1\big) - H(\rho_1)\\
R_1 + R_2' &< 1 - \max\{H(\rho_0), H(\rho_1)\},
\end{align}
for some $0 \leq \beta \leq \frac{1}{2}$, and
\item $\mathcal{R}_2$: all rate pairs $(R_2 + R_1',R_2)$ satisfying
\vspace{-1ex}
\begin{align}
R_2 &< 1 - H\big(\alpha(1-\rho_1) + (1-\alpha)\rho_1\big) \label{eq:beta-r2}\\
R_1' &< H\big(\alpha(1-\rho_2) + (1-\alpha)\rho_2\big) - H(\rho_2)\\
R_2 + R_1' &< 1 - \max\{H(\rho_0), H(\rho_2)\},
\end{align}
for some $0 \leq \alpha \leq \frac{1}{2}$. 
\end{itemize}


\subsection{Complete-Decode-Forward}
Using CDF, the relay fully decodes both $W_1$ and $W_2$ on the uplink, which is a multiple-access channel. So, if $R_1 < 1 - H(\rho_0)$, $R_2 < 1 - H(\rho_0)$, $R_1 + R_2 < 1 - H(\rho_0)$, then the relay can reliably decode $W_1$ and $W_2$~\cite{ahlswede71,liao72}. Note that the last inequality implies the first two.
Assuming that the relay has successfully decoded $W_1$ and $W_2$, it broadcasts $(W_1,W_2)$ on the downlink. Using a joint source-channel decoding, each user $i$, $i=1$, $2$, can reliably decode the other user's message from their respective received signals $\boldsymbol{Y}_i$ and their own messages $W_i$ if $R_1 < 1 - H(\rho_2)$ and $R_2 < 1 - H(\rho_1)$~\cite{kramershamai07,oechteringschnurr08}.
Combining the uplink and the downlink constraints, the achievable rate region using CDF is all $(R_1,R_2)$ satisfying:
\vspace{-1ex}
\begin{align}
&R_1 < 1 - H(\rho_2),\; R_2 \leq 1 - H(\rho_1),\label{eq:fdf-2} \\
&R_1 + R_2 < 1 - H(\rho_0). \label{eq:fdf-3}
\end{align}

\subsection{Discussion}
Using CDF, the relay needs to fully decode the users' messages on the uplink, and this restricts the sum rate to be constrained by the uplink bandwidth, c.f. \eqref{eq:fdf-3}. So, CDF is not \emph{uplink optimized}. On the other hand, using FDF with rate splitting and separate source-channel coding, the users' \emph{a priori} knowledge about their own messages is not utilized during the channel decoding on the downlink -- their own messages are used only \emph{after} channel decoding. So, FDF with separate source-channel coding is not \emph{downlink optimized}. These two coding strategies do not achieve the capacity of the finite field adder MWRC in general. FDF with rate splitting and joint source-channel coding overcomes these two shortcomings by having the relay decode only functions of the source messages on the uplink and having the users utilize their own messages in channel decoding on the downlink. This strategy indeed achieves the capacity of the finite field adder MWRC. This work suggests that for the general MWRC, functional decoding and joint source-channel coding should be utilized.




\end{document}